\documentclass[journal]{IEEEtran}
\ifCLASSINFOpdf

\else

\fi
%
\usepackage{algorithm} 		
\usepackage{algorithmic} 	
\usepackage{bm}
\usepackage{graphicx}
\usepackage{amsmath}
\usepackage{cite}
\usepackage{amssymb}
\usepackage{multirow}

\usepackage{epstopdf}
\hyphenation{op-tical net-works semi-conduc-tor}
\usepackage{exscale}
\usepackage{relsize}
\usepackage{color}
\usepackage{graphicx}
\usepackage{subfigure}
\usepackage[table]{xcolor}
\usepackage{amssymb,amsmath}
\usepackage{cases}
\usepackage{longtable}
\usepackage{rotating}
\usepackage{multirow}
 \usepackage{siunitx}
\usepackage{epstopdf}
\usepackage{cite}
\usepackage{float}
\usepackage[marginal]{footmisc}
\usepackage{stfloats}
\usepackage[colorlinks,
            linkcolor=black,
            anchorcolor=black,
            urlcolor=black,
            citecolor=black]{hyperref}
\usepackage[hyphenbreaks]{breakurl}

\newtheorem{theorem}{Theorem}

\newtheorem{proof}{Proof}

\input epsf
\linespread{0.99}

\begin{document}

\title{Cost-Effective Task Offloading Scheduling for Hybrid Mobile Edge-Quantum Computing}

\author{Ziqiang Ye,\IEEEmembership{~}
Yulan Gao, \IEEEmembership{Member, IEEE,}
Yue Xiao, \IEEEmembership{Member, IEEE,}
Minrui Xu, \IEEEmembership{Student Member, IEEE,}\\
Han Yu, \IEEEmembership{Senior Member, IEEE,}
and Dusit Niyato, \IEEEmembership{Fellow, IEEE}
\thanks{Z. Ye and Y Xiao are with the National Key Laboratory of Wireless Communications, University of Electronic Science and Technology of China, Chengdu 611731, China (e-mail: yysxiaoyu@hotmail.com, xiaoyue@uestc.edu.cn).}
\thanks{Y. Gao, M. Xu, Y Han, and D. Niyato are with the School of Computer Science and Engineering, Nanyang Technological University, Singapore 639798, (e-mail: yulan.gao@ntu.edu.sg, minrui001@ntu.edu.sg, han.yu@ntu.edu.sg, dniyato@ntu.edu.sg).}}

\markboth{~~~}
{Shell \MakeLowercase{\textit{et al.}}: A Sample Article Using IEEEtran.cls for IEEE Journals}

\maketitle

\begin{abstract}
In this paper, we aim to address the challenge of hybrid mobile edge-quantum computing (MEQC) for sustainable task offloading scheduling in  mobile networks.
We develop cost-effective designs for both task offloading mode selection and resource allocation, subject to the individual link latency constraint guarantees for mobile devices, while satisfying the required success ratio for their computation tasks.
Specifically, this is a time-coupled offloading scheduling optimization problem in need of a computationally affordable and effective solution.
To this end, we propose a deep reinforcement learning (DRL)-based Lyapunov approach.
More precisely, we reformulate the original time-coupled challenge into a mixed-integer optimization problem by introducing a penalty part in terms of virtual queues constructed by time-coupled constraints to the objective function.
Subsequently, a Deep Q-Network (DQN) is adopted for task offloading mode selection.
In addition, we design the Deep Deterministic Policy Gradient (DDPG)-based algorithm for partial-task offloading decision-making.
Finally, tested in a realistic network setting,
extensive experiment results demonstrate that our proposed approach is significantly more cost-effective and sustainable compared to existing methods.
\end{abstract}

\begin{IEEEkeywords}
Quantum computing, deep reinforcement learning, task offloading, Lyapunov optimization.
\end{IEEEkeywords}

\section{Introduction}
\IEEEPARstart{T}{ask} offloading has emerged as a crucial research topic in mobile edge computing \cite{kan2018task}, particularly with the rapid advancement of mobile devices and six-generation (6G) wireless communication technologies in recent years.
The idea behind task offloading is to transfer computation-intensive tasks from resource-limited mobile devices to powerful edge servers for remote execution\cite{mahenge2022energy}, and thus improving the battery-life of mobile devices and sustainability in mobile edge network.
Mobile Edge Computing (MEC) is a promising task offloading solution that combines cloud computing and edge computing technologies to bring computation, storage and network services closer to mobile users\cite{cheng2021research,mahenge2022energy,9575181,10032267}.
There are many ongoing and completed projects related to MEC in smart cities worldwide.
For example, projects in smart cities such as {Masdar City, UAE} and {Songdo, South Korea} are adopting MEC to improve the delivery of services to citizens.
The {MEC-Health Project} \cite{mechealth}, which is founded by the European Union, aims to create a platform that utilizes MEC technology to deliver healthcare services.
By leveraging the capabilities of MEC, these projects aim to enhance the delivery of remote patient monitoring and telemedicine services, enabling real-time data analysis and reducing response times for medical interventions, ultimately improving healthcare outcomes.

Despite the promising advantages of MEC, the biggest challenge faced by network operators is meeting the soaring demand for computing and communication resources required to satisfy quality of services (QoS).
This challenge is further complicated by the ever-increasing complexity and heterogeneity of tasks, requiring a significant amount of resources to execute computation tasks with diverse complexity and urgency \cite{al2017technologies}.
Classical MEC often requires costly and energy consuming hardwares (e.g., graphics processing units (GPUs), random-access memory (RAM) and hard disk drives (HDDs).
Therefore, research on finding efficient and yet cost-effective solutions for MEC in practice is imperative.

Fortunately, quantum computing was introduced as a promising new paradigm which builds on principles of quantum mechanics. In quantum computing, each qubit can represent multiple states simultaneously through superposition, while entanglement allows for the correlation between multiple qubits
\cite{pittenger2012introduction,rietsche2022quantum}.
Precisely, the quantum superposition states can be leveraged to sample the function, with the measured outcome serving as the operation's result.
This essentially improves the microcircuit found in a conventional computer with a physical experiment.
The exceptional characteristic of quantum computing enables it to execute calculations that would require millions of years for even the fastest classical supercomputers, thereby transforming previously inconceivable computations into attainable feats\cite{rietsche2022quantum}.
Notably, quantum computing dramatically minimizes latency, achieving breakneck computing speeds that far exceed those of traditional systems. Moreover, its intrinsic security attributes, grounded in the tenets of quantum mechanics, provide an unmatched defense against cyberattacks and data breaches.
Consequently, these remarkable capabilities were spurred innovation in the MEC, seamlessly incorporating quantum computing into the existing infrastructure to meet boundless growth requirements and deliver satisfactory quality of service.

Despite the promising opportunities provided by incorporating quantum computing into mobile edge network, there remain fundamental challenges that must be addressed to enable quantum computing techniques to benefit hybrid mobile edge-quantum computing (MEQC) systems.
Pioneering contributions in this area \cite{ngoenriang2022optimal, cicconetti2022resource, ngoenriang2023dqc2o} explored the joint task offloading and heterogeneous computing resource (e.g., CPUs, GPUs, and Quantum processing Units (QPUs)) allocation for hybrid MEQC systems based on competitive performance benchmarks.
For instance, \cite{ajagekar2019quantum} explored the applications of quantum computing to energy system optimization problems. In \cite{ngoenriang2023dqc2o}, an adaptive distributed quantum computing approach--DQC2O--have been proposed to manage quantum computers and quantum networks.
In \cite{ravi2021quantum}, various trends in job execution and resources consumption/utilization on quantum cloud systems have been analyzed.
More recently, Ravi {\em et al.} \cite{ravi2021adaptive} proposed an optimized adaptive job scheduling scheme for quantum cloud environments, meticulously accounting for salient features, including queuing durations and fidelity patterns exhibited by diverse quantum machines.
Nakai {\em et al.} \cite{nakaiqubit} devised a heuristic optimization algorithm addressing the qubit allocation conundrum in the realm of distributed quantum computing, which judiciously ascertains the allocation of qubits and their associated quantum gates to the appropriate quantum processors.

Existing research on joint task offloading and resource allocation in hybrid MEQC systems are mostly designed for static scenarios.
Adaptable and efficient resource allocation and task offloading scheduling solutions for hybrid MEQC networks are required to meet the diverse and evolving needs of users, applications, and devices, such as ultra-low latency, high energy efficiency, and high scalability.
In this paper, we design a dynamic task offloading scheduling approach for minimizing the time-average weighted sum of energy consumption and time latency (WSET) within the context of a hybrid MEQC network (embedded with CPUs/GPUs and QPUs).
To tackle the time coupled problem with the objective of minimizing WSET, we propose an efficient two-stage approach capitalizing on Lyapunov optimization, DQN and DDPG algorithm.
It is worth emphasizing that we consider a practical yet demanding scenario, whereby tasks can be partially assigned across both classical and quantum computing resources of edge servers.
The novelty and contributions of this paper are as follows:
\begin{itemize}
    \item Under the hybrid MEQC paradigm, we investigate four different task offloading modes: Local Offloading Mode, Cloud Offloading to Classical Mode, Cloud Offloading to QPU Mode, and Partial Offloading to Cloud Mode. These modes are determined based on the location of mobile devices, wireless channel conditions, and the computation capability of clouds. Our goal is to formulate the SWET minimization problem, which optimizes task offloading strategies while ensuring compliance with the constraints of instantaneous time latency and success ratio of computation tasks.
    \item To address the SWET minimization problem, which also takes into account the constraint of the number of qubits required to execute the quantum circuits, we propose a DRL-based Lyapunov approach.
    This DRL-based Lyapunov approach offers provably convergent and approximate optimization solutions.
    \item We conduct numerical evaluations to assess the performance of the proposed two-stage algorithm in a realistic mobility scenario. Specifically, we adapt the Gauss-Markov Mobility Model (GMMM), originally designed for simulating mobile communication systems. By simulating the algorithm on realistic network settings, we demonstrate its superior cost-efficiency, surpassing not only a conventional MEC network but also existing baseline algorithms.
\end{itemize}

The rest of this paper is constructed as follows.
Section II provides a literature review of related work in the field.
In Section III, we introduce the system model and discuss the methods used to calculate energy consumption and latency.
Section IV formulates the optimization problem for the proposed system.
Section V presents the details of the deployment of the DRL algorithm and Lyapunov framework for solving the optimization problem.
Section VI presents the simulation results and their analysis.
Finally, in Section VII, we conclude the paper and discuss future research directions.

\section{Related Work}

\subsection{Task Offloading in MEC}
In MEC, task offloading constitutes a pivotal conundrum, requiring judicious discrimination of which tasks are to be performed locally on the mobile device and which tasks need to be offloaded to nearby edge servers for processing.
In this way, the MEC system performance in terms of energy efficiency, latency, and throughout can be improved.
In light of the pivotal role that MEC plays in the context of wireless communication and computer fields, a vast corpus of literature was been dedicated to devising task offloading methodologies that minimize costs--encompassing time latency and energy consumption--while guaranteeing to communication-computing resource limitations and stringent latency constraints.
The established theoretical tools for these offloading can be developed by the theory of {\em convex optimization} \cite{boyd2004convex}, {\em dynamic programming} \cite{chen2022dynamic},  {\em reinforcement learning} \cite{tang2020deep}, {\em genetic algorithms} \cite{li2020genetic}, and {\em particle swarm optimization} \cite{you2021efficient}.
For instance, Gao \textit{et al.} in \cite{gao2019dynamic} proposed a Lyapunov optimization-theoretic approach to social-aware task offloading and cost-effective multi-resource management in heterogeneous IoT networks.
Moreover, the authors in \cite{gao2022multi} investigated the problem of distributed multi-resource allocation (include transmit power and CPU resources) for minimizing SWET in on-device distributed federated learning systems.

\subsection{Quantum Computing}
With the advent of quantum computing, remarkable improvements in computational efficiency were realized for a multitude of computation tasks.
This cutting-edge approach to computing capitalizes on the unique characteristics of qbits, departing significantly from the conventional principles of classical computing.
In the domain of classical computing, bits are employed to signify either $0$ or $1$.
For instance, the deactivated state (i.e., off) of a switch corresponds to $0$, while the activated state (i.e., on) represents $1$.
A quantum bit, or qubit, adheres to the principles of quantum mechanics, permitting it to exist in a coherent superposition of both $0$ and $1$ states.
Consequently, a qubit can concurrently encapsulate information pertaining to both $|0\rangle$ and $|1\rangle$.
This distinctive characteristic, known as superposition \cite{steane1998quantum}, refers to the inherent property of quantum particles, thereby allowing them to exist in multiple states simultaneously until a measurement is made.
By manipulating superposed qubits, operations involving both $0$ and $1$ states can be executed concurrently.
This concept bears a resemblance to the single instruction stream multiple data stream (SIMD) parallelism found in traditional computing systems.
Crucially, the capacity to concurrently represent multiple numerical values through quantum superposition expands exponentially with the quantity of qubits.
Utilizing $N$ qubits allows for the simultaneous representation of information pertaining to $2^N$ results.
Furthermore, operations performed on these $N$ qubits effectively complete tasks on $2^N$ results concurrently.
This remarkable {\em ultra-parallel} functionality underpins the exceptional computational prowess of quantum computing.

\subsection{Mobile Edge-Quantum Computing}

Recently, the rapid expansion of data volume strained the capabilities of existing  MEC systems, which rely on classical computing technology, leading to difficulties such as latency, computational bottlenecks, data storage limitations, and energy inefficiencies in managing vast amounts of data and potentially impacting the swiftness and precision of data analysis.
This challenge is particularly pronounced for data-intensive applications such as machine learning, big data analytics, and the Metaverse \cite{mystakidis2022metaverse}.
A promising approach to addressing these challenges is the quantum computing which potentially offers superior performance in processing extensive datasets in certain computational scenarios.
Among the early contributions in this area, Ma {\em et al. } \cite{ma2022hybrid} explored the integration of quantum computing with edge networks, aiming to enhance computational power and security, and discussed the associated opportunities and challenges.
Zaman {\em et al.} \cite{ zaman2023quantum} investigated the potential of variational quantum computing and quantum machine learning for 6G ultra-reliable and low-latency communication (URLLC) by exploring the integration of quantum machine intelligence into edge networks.
Inspired by these frameworks integrating quantum computing and edge networks, a vast corpus of literature has focused on developing resource allocation techniques to enhance system performance.
For instance, Xu {\em et al.} \cite{xu2022learning} introduced a novel paradigm of MEQC, which integrates quantum capabilities into mobile edge networks, thereby reducing latency and enabling faster data processing for mobile devices.
Based on the proposed MEQC system model, Xu {\em et al.} investigated the problem of classical and quantum computation offloading in multi-user multi-server systems.
Cicconetti {\em et al.} \cite{cicconetti2022resource} designed a resource allocation strategy focusing on evaluating the necessity of quantum internets for distributed quantum computing.
Wang {\em et al.} \cite{wang2021resource} proposed a quantum-inspired reinforcement learning approach to dynamically select optimal network access and offloading strategies between edge servers via WiFi and cloud servers via 5G in vehicle networks.

In summary, existing literature mainly focuses on analysing the instantaneous performance of MEQC systems, and none of these papers study the devices' long-term task offloading scheduling accounting for user mobility and task variability.
To our knowledge, this is the first to analyze  long-term task offloading scheduling of mobile devices in hybrid MEQC system,  and propose a DRL-based Lyapunov approach to achieve cost-effective and sustainable task offloading.
\begin{figure}[!t]
\begin{minipage}[t]{0.48\textwidth}
\centering
\includegraphics[width=\textwidth]{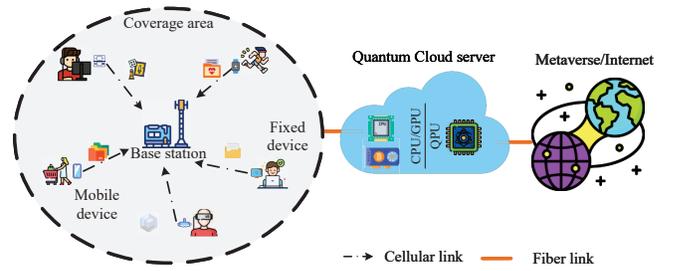}
\caption{System model.}
\label{fig:1}
\end{minipage}
\end{figure}
\section{System Model}
In this section, we introduce the system model for the proposed hybrid MEQC system.
Subsequently, we delineate four task offloading modes, each catering to different levels of available computing resources on mobile devices or edge server.

\subsection{System Overview}
As shown in Fig. \ref{fig:1}, we consider a hybrid MEQC system where $M$ heterogeneous mobile devices are adjoined to a single cloud server equipped with multiple CPUs, GPUs, and QPUs.
For notational compactness,  we will reverse the index $0$ for the central cloud.
Consider ${\cal M}:=\{1, 2, \ldots, M\}$ as the set of devices, where each mobile device is subject to dynamic computation tasks stemming from its respective application, and can randomly move around within the coverage of the corresponding access point (AP), where the AP can communicate with the central cloud via fiber links.
The introduced hybrid MEQC (i.e., quantum cloud server) depicted in Fig. \ref{fig:1} is both rational and easily deployable, as the integration quantum computing capabilities within the current MEC infrastructure proves operationally viable.
This setting can be perceived as an instantiation of the hybrid computing system outlined in  \cite{chamberlain2008visions}.
Furthermore, for the purpose of simplification in the subsequent analysis, we assume the use of a multi-access protocol (e.g., OFDMA) to establish wireless communication links for IoT devices.
Without loss of generality, we assume that each device $m\in{\cal M}$ has assigned a licensed band with bandwidth denoted by $B$.

\subsection{Computation Task Offloading Modes Design}
Our primary emphasis is directed toward the task offloading phase, during which the AP determines the task offloading mode of mobile IoT devices.
To begin with, we present the model for task offloading mode selection used in our hybrid MEQC system, where each device act based on its achieved {\em performance} (i.e., WSET).

Before designing the task offloading mode, it is essential to come to a consensus on what constitutes a computation task.
Following \cite{ngoenriang2022optimal}, the computation task of user $m$ at time slot $r$ can be denoted as a tutle $\mathcal{J}_m[r] \triangleq \{d_m[r], q_m[r], t_m^{\max}[r]\}$.
The tutle contains the data size of the task $d_m[r]$, the required CPU cycles $q_m[r]$ to execute computation task per datasize and the maximum delay $t_m^{\max}[r]$.

To successfully process the above-mentioned computation tasks within the stipulated deadlines, several challenges persist:
\begin{itemize}
    \item {Given the large number of heterogeneous devices in the network and the diversity of applications that these smart devices support, computation tasks significantly vary across devices. Ensuring each smart device can complete its respective computations within the timeline poses a significant challenge.}
    \item {The mobility of smart devices, resulting in frequent changes in their locations, leads to constant variations in the communication cost between the device and the AP.
    Ensuring timely computation task completion, even under conditions of low communication quality.}
\end{itemize}
In response to these challenges, we develop a task offloading mode that incorporates hybrid classical and quantum computing capabilities, enabling offloading to an quantum cloud server, local classical computation, and offloading to an edge servers.
To ensure our model's relevance to real-world scenarios, we take partial task offloading into account.
We introduce a continuous variable, $\phi_m\in[0,1]$,  representing the percentage of offloading to a remote server for computation by mobile device $m$.
Consequently, the computation task $(1-\phi_m)d_m[r]$ is processed locally, while $\phi_md_m[r]$ is offloaded to the cloud.
Correspondingly, the task offloading modes depicted in Fig. \ref{fig:4} have been designed based on the hybrid MEQC system's state, wherein the associated communication and computation costs are derived through mathematical expressions, respectively.
Then, we provide a comprehensive exposition of each task offloading mode.

{\bf\em 1) Local Offloading Mode: }
As depicted in Fig. \ref{fig:4}(a), when the computational task is small in size, there is no need for the mobile device to offload it to a remote server. By performing the computation locally, the device can save on transmission latency and reduce energy consumption.

As mentioned earlier, only data of size $(1-\phi_m)d_m[r]$ is processed locally.
Therefore, in the local offloading mode, a mobile device $m \in \mathcal{M}$ executes its computation task $\mathcal{J}_m[r]=\left\{(1-\phi_m) d_m[r], q_m[r], t_m^{max}[r]\right\}$.
Given the prevalent heterogeneity in device CPU performance due to varying performance requirements, we utilize the CPU frequency $f_m$ to characterize the performance of each device.
Building upon the findings of a previous study, the energy consumption of device $m$ while processing the computation task $\mathcal{J}_m[r]$ can be determined by
\begin{equation}\label{eq:1}
    ec_{m,m}[r]=\rho f_m^\zeta t_{m,m}[r],\forall m \in \mathcal{M},
\end{equation}
where $t_{m,m}[r]=\frac{(1-\phi_m[r])d_m[r]q_m^C[r]}{f_m}$ is the time spent in calculation of device $m$, $\rho$ is a constant which hinges on the average switched capacitance and the average activity factor, and  $\zeta\geq 2$ also is a constant.
\begin{figure}[!t]
	\centering
	\includegraphics[width=3.5in]{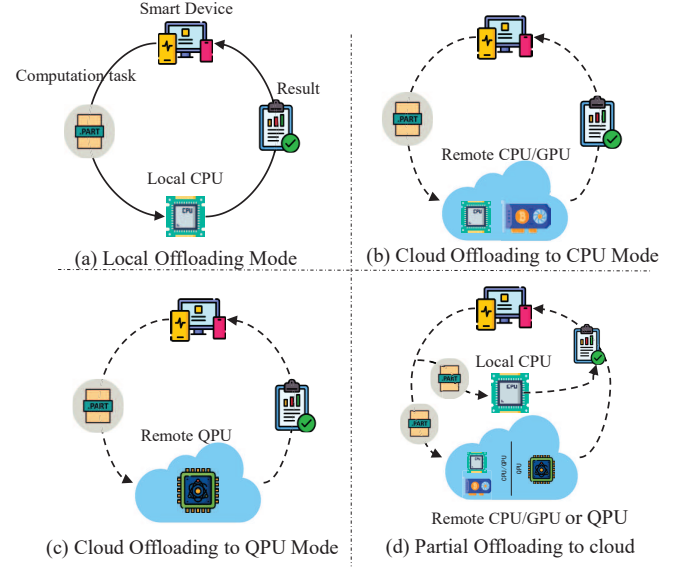}
	\caption{Illustration of the proposed four computation offloading modes in the hybrid MEQC system.}
	\label{fig:4}
\end{figure}

{\bf\em 2) Cloud Offloading to Classical Mode: }
To utilize the cloud offloading mode, as illustrated in Fig. \ref{fig:4}(b) (referred to as the classical mode), each mobile device $m$ offload the rest of data $\phi_md_m[r]$ to the AP via the cellular network. Subsequently, the AP facilitates the transmission of tasks to the remote server through optical fiber, as mentioned earlier in our discussion.

In this cloud offloading mode, depicted in Fig. \ref{fig:4}(b), the mobile device selects to utilize the CPU of the remote server for computation. The rationale behind this decision may stem from several factors. One reason is the challenge in guaranteeing the accuracy of quantum operation outcomes when employing a QPU. Another consideration could be the limited availability of quantum computing resources. Alternatively, there may be other factors that indicate the CPU is a more suitable choice for the given task.

To accurately estimate the transmission time and energy consumption in stages i) and iii),it is considered that the channel gain from mobile device $m$ to the AP is subject to path loss, shadowing, and fast-fading effects, leading to an instantaneous channel gain. Further, it is postulated that the channel gain remains constant within each time slot, being independent and identically distributed across all the slots. We denote the channel gain from device $m$ to the AP at time slot $r$ as $g_{m,0}[r]$, and $p_{m,0}[r]$ represents the corresponding transmit power.
During stage iii), the calculated result packet ensures the user is informed of the successful task execution. The size of the returned result is denoted by $d_m^{\text{ack}}$, which carries the output of the computational task.
For the sake of analytical simplicity, it is assumed that the AP transmits the results packet back to the mobile device at an constant rate.

In stage ii), the assessment method for time and energy consumption parallels that of the local offloading mode, with the exception that the CPU's performance in the cloud is superior, denoting a higher frequency. The transmission data rate achieved by mobile IoT device $m$ at time slot $r$ can then be articulated as follows:
\begin{equation}\label{eq:2}
R_{m,0}[r]=B\log_2{\left(1+\frac{p_{m,0}[r]g_{m,0}[r]}{\sigma^2}\right)},
\end{equation}
where $\sigma$ is the Additive Gaussian White Noise.

Upon introducing notations $t_{m,0}^{C,1}[r], t_{m,0}^{C,2}[r], t_{m,0}^{C,3}[r]$, the total time cost of the three stages under the mode of Fig. \ref{fig:4}(b) can be shown as:
\begin{align}\label{eq:3}
t_{m,0}^C[r]&=t_{m,0}^{C,1}[r]+t_{m,0}^{C,2}[r]+t_{m,0}^{C,3}[r],
\end{align}
where $t_{m,0}^{C,1}[r]\triangleq\frac{\phi_m[r]d_m[r]}{R_{m,0}[r]}, t_{m,0}^{C,2}[r]\triangleq \frac{\phi_m[r]d_m[r]q_m^C[r]}{f_{m,0}}$, and $t_{m,0}^{C,3}[r]\triangleq \frac{d_m^{ack}}{R_{m,0}[r]}$ represent the time cost of stage i) ii), and iii), respectively.
The CPU frequency $f_{m,0}$, allocated by the cloud to mobile device $m$, is determined by the financial contribution made by the device to the central cloud.
In addition, we assume that each device is assigned a constant CPU frequency, whereas the cloud retains the flexibility to allocate various CPU frequencies to individual devices.
The energy consumption that the server uses CPUs/GPUs to execute mobile device $m$'s computation task is
	\begin{align}\label{eq:4}
		ec_{m,0}^C[r]&=ec_{m,0}^{C,1}[r]+ec_{m,0}^{C,2}[r]+ec_{m,0}^{C,3}[r]\\
		&=p_{m,0}t_{m,0}^{C,1}[r]+\rho f_{m,0}^\zeta t_{m,0}^{C,2}[r]+p_{m,0}t_{m,0}^{C,3}[r].\nonumber
	\end{align}

{\bf\em 3) Cloud Offloading to QPU Mode: }
Executing computation tasks with QPUs can be advantageous when computation tasks are complex, and the classical computation resources of the remote server faces challenges in meeting the time deadlines (i.e., the latency requirement) for accomplishing the tasks.
Offloading computation tasks to a remote server with QPUs can drastically reduce execution time and device costs, provided users have secured sufficient quantum resources and the quantum computing error rate falls within an acceptable ratio.
Prior to formulating the energy consumption and time delay for the Cloud Offloading to QPU mode, it is important to first outline the primary steps involved in processing a computation task via quantum computing.

A quantum computer, housed within a server, operates on numerous physical qubits in an extremely low-temperature environment, approximating absolute zero. The quantum server can expedite computation tasks by compiling them into quantum circuits according to the quantum algorithms. As a result, the quantum computation task derived from $\mathcal{J}_m[r]$ can be represented by $\mathcal{J}_m[r]=\left\{d_m[r], o_m[r],l_m[r], t_m^{\max}[r]\right\}$, where $o_m[r]$ denotes the number of qubits required to execute the quantum circuit, and $l_m[r]$ denotes the depth of the quantum circuit.
However, due to the delicate nature of qubits--which can maintain a quantum state for only a brief period--contemporary quantum computers encounter difficulties in producing reliable results. This predicament suggests that the information encoded in the quantum system may be lost before the quantum computer meassure its computations. Consequently, the most formidable challenge confronting quantum computing technology is the correction of errors that occur during computation.
Quantum error correction (QEC) strategies are essential in actualizing fault-tolerant quantum computing, as they mitigate the influence of quantum noise on stored quantum information, imprecise quantum gates, faulty quantum preparation, and incorrect measurements. Implementing QEC schemes requires an array of gate operations. Given that the unique algorithms at the logical level ultimately convert into gate operations, the energy consumption associated with the error correction framework remains independent of the comprehensive algorithm design.

Our analysis focuses not on the design of top-level algorithms, but rather on the effect of QEC on energy consumption.
Let $N_1$, $N_2$, and $N_M$ defined as the average number of one-qubit (1qb) gates, two-qubit (2qb) gates, and measurement gates, respectively, that operate in parallel in each time step.
Likewise, the modes depicted in Fig. \ref{fig:4}(c)-\ref{fig:4}(d) take into account the energy consumption and time cost caused by transmission and computation.
In practice, the time cost in quantum computing mainly caused by multi-gate operations in quantum circuits\cite{fellous2022optimizing} can be obtained as:
	\begin{align}\label{eq:5}
	    t_{m,0}^Q[r]=&t_{m,0}^{Q,1}[r]+t_{m,0}^{Q,2}[r]+t_{m,0}^{Q,3}[r],
	\end{align}
where $t_{m,0}^{Q,1}[r]=\phi_m[r]d_m[r]o_m[r][\tau_1N_1+\tau_2N_2+\tau_MN_M],$ $ t_{m,0}^{Q,2}[r]=\frac{\phi_m[r]d_m[r]}{R_{m,0}[r]}$, and $t_{m,0}^{Q,3}[r]=\frac{d_m^{\text{ack}}}{R_{m,0}[r]}$ also represent the time cost of stage i) ii), and iii), respectively.
$\tau_1$, $\tau_2$ and $\tau_M$ are the processing time of 1qb gates, 2qb gates, and measurement gates, respectively.
Moreover, the detail transmission time cost in the mode of Figs. \ref{fig:4}(c) and \ref{fig:4}(d) can be derived from  \eqref{eq:2}.

As previously discussed, qubits display acute sensitivity to their surroundings.
To ensure optimal performance, quantum computers necessitate operation at temperatures approaching absolute zero.
This stipulation arises from the qubits' vulnerability to thermal vibrations, however slight, which can disrupt their state, leading to computational inaccuracies.
Consistent with the research of Fellous-Asiani {\em et al.} \cite{fellous2022optimizing}, the principal consumer of energy in quantum computing is the cooling system.
This system is integral in maintaining an environment temperature near absolute zero, a prerequisite for curtailing thermally induced vibrations that could interfere with the qubits' stability.
To approximate the energy consumption, we may employ the following  mathematical formulation.
\begin{align}\label{eq:6}
ec_{m,0}^Q[r]=&ec_{m,0}^{Q,1}[r]+ec_{m,0}^{Q,2}[r]+ec_{m,0}^{Q,3}[r]\notag\\
=&\phi_m[r]d_m[r]o_m[r](P_1N_1+P_2N_2+P_MN_M\notag\\
&+P_QQ)+p_{m,0}t_{m,0}^{Q,1}[r]+p_{m,0}t_{m,0}^{Q,3}[r].
\end{align}
Eq. \eqref{eq:6} employed for computing energy consumption encompasses the variables $P_1$, $P_2$, $P_M$, and $P_Q$.
Respectively, these represent the energy consumption associated with individual 1-qubit gates, 2-qubit gates, measurement gates, and qubits themselves.
The variable $Q$ symbolizes the count of physical qubits incorporated within a single logical qubit.
Additionally, the energy consumption for communication aligns closely with the pattern observed in cloud offloading to CPU mode.
In this model, the energy expenditure for data transfer comprises both the energy necessary for uploading computation tasks and that required for downloading values returned from remote servers.

While ultra-low temperatures and error correction mechanisms can impart a degree of reliability to quantum computations, the innate noise associated with quantum circuits persists, as highlighted by Krinner {\em et al.} in \cite{krinner2022realizing}. This noise is an intrinsic element that cannot be entirely eradicated.
Consequently, correctness often serves as a benchmark in assessing the efficacy of quantum circuits, especially with respect to non-deterministic quantum operations.
However, as the quantum circuits' scale expands, the potential error sites within the circuit multiply correspondingly.
This escalation, in turn, contributes to a decline in the computation task processing precision of the server.
Incorporating the element of noise along with the error probability for each physical gate, symbolized as $\epsilon_{err}$, as elucidated in the work of Ngoenriang {\em et al.} \cite{ngoenriang2022optimal}, enables us to arrive at a linear approximation of the success probability.
It is feasible to extract a linear approximation pertaining to the success probability, which is detailed in the following.
\begin{equation}\label{eq:7}
\mathcal{S}_m[r]=1-\mathcal{N}_m^L[r]\epsilon_{err}\left({\epsilon_{err}}/{\epsilon_{thr}}\right)^{2^{k_0}},
\end{equation}
where $\mathcal{N}_m^L[r]=o_m[r]\times l_m[r]$ represents the number of locations where logical errors may occur, $\epsilon_{trr}$ is the threshold for error correction, and $k_0$ is the level of error-correcting connectivity of the quantum server.
	
We assume that the quantum computing capacity of the quantum server leased by mobile device $m$ is $o_m^C$. Quantum computing is available only when the number of quantum circuit qubits required by the mobile device $m$ to be computed task is less than or equal to $o_m^C$ and the probability of correct operation is greater than or equal to $2/3$, which is a classical selection for a single execution of quantum algorithms\cite{fellous2022optimizing}.

\section{Problem Formulation}
In this section, the time average cost of the hybrid MEQC system will be given based on the above offloading modes.
Before giving the precise description of the cost, we first introduce the selection strategy vector ${\pmb\phi}\in {\mathbb R}^{1\times M}$ with entities $\phi_m$ representing the fraction of tasks offloaded from $m$ to the hybrid MEQC servers, where
\begin{align}\label{eq:8}
\mathcal{\phi}_{m}[r]=\begin{cases}
1, &m\text{~chooses to execute task remotely}, \\
(0,1), &m\text{~executes task partially on the server},\\
0, &m\text{~chooses to execute task locally}.
\end{cases}
\end{align}
Additionally, we introduce a binary variable, denoted as ${\cal Y}_m[r]$, to indicate whether device $m$ chooses to offload to the CPU within a hybrid MEQC server, provided $\phi_m[r]=1$ holds true.
In this scenario, we establish that
\begin{equation}\label{eq:9}
\mathcal{Y}_{m}[r]=\begin{cases}
1, & m \text{ chooses to further offload to the CPU}, \\
0, & \text{ otherwise.}
\end{cases}
\end{equation}
In an attempt to incentivize mobile devices to delegate an increased number of tasks to the cloud, we establish a long-term constraint on the continuous variable $\pmb\phi$.
Aligning with the prerequisites of low latency and quick response, we stipulate that the time-averaged cloud offloading rate for any given mobile device $m$ should not fall below a certain threshold, denoted as $\Delta\in[0,1]$.
An increase in the value of $\Delta$ corresponds to a rise in the time average of $\pmb\phi$, which, in turn, indicates a more robust inclination on the part of the user to offload computation tasks to the cloud server.
In an effort to boost the utilization rate of the remote server, we establish the condition where
\begin{equation}\label{eq:99}
    {\varlimsup_{\cal R \to \infty} } \frac{1}{\cal R}\sum\nolimits_{r=1}^{\cal R}\mathbb{E}\left\{\phi_m[r] \right\}\ge \Delta,\forall m \in \mathcal{M}.
\end{equation}

In summary, the total offloading and execution cost can be encapsulated in Eq. \eqref{eq:10}, shown at the top of the next page.
\begin{figure*}
\begin{align}\label{eq:10}
    C(\pmb{\phi}[r],\mathcal{Y}[r])=&\lambda_m^t\left((1-\mathcal{\phi}_m[r])t_{m,m}[r]+\mathcal{\phi}_m[r]\left(\mathcal{Y}_{m}[r]t_{m,0}^C[r]+(1-\mathcal{Y}_{m}[r])t_{m,0}^Q[r]\right)\right)\notag\\
    &+\lambda_m^{ec}\left((1-\mathcal{\phi}_m[r])ec_{m,m}[r]+\mathcal{\phi}_m[r]\left(\mathcal{Y}_{m}[r]ec_{m,0}^C[r]+(1-\mathcal{Y}_{m}[r])ec_{m,0}^Q[r]\right)\right).
\end{align}
\hrulefill
\end{figure*}
In Eq. \eqref{eq:10}, $\lambda_m^t$ and $\lambda_m^{ec} \in [0,1]$ are the weight parameters of latency and energy consumption for mobile device $m$.
Upon introducing Eq.(\ref{eq:10}), the problem of minimizing the total cost of task offloading in the hybrid MEQC system can be formulated as
\begin{align}
    \min_{\pmb{\phi}[r],\mathcal{Y}[r]}& C(\pmb{\phi}[r],\mathcal{Y}[r])\label{eq:11}\\
\text{s.t.~} &\mathcal{\phi}_m[r]\le 1 ~ ,\forall m \in \mathcal{M},\tag{\ref{eq:11}a}\\
    &t_m[r] \le t_m^{\max}[r] ~ ,\forall m \in \mathcal{M},\tag{\ref{eq:11}b}\\
    &\mathcal{S}_m[r]\ge 2/3 ~ ,\forall \mathcal{Y}_m[r]=0, \tag{\ref{eq:11}c}\\
    &o_m^C\ge o_m[r] ~ ,\forall \mathcal{Y}_m[r]=0\tag{\ref{eq:11}d},\\
    &\text{~Eq.~}\eqref{eq:99}.\tag{\ref{eq:11}e}
\end{align}
Constraint (\ref{eq:11}a) dictates that any mobile device $m$ must choose one of the available modes for executing a computation task, either using local computing resources or cloud resources.
Constraint (\ref{eq:11}b) ensures that mobile device $m$ will not select a mode that fails to meet the latency requirements of the computing task, as exceeding the latency threshold would result in a loss of real-time responsiveness and potentially lead to invalid results.
Concludingly, constraints (\ref{eq:11}c) and (\ref{eq:11}d) corroborate the possibility of task offloading to the QPU at the server for quantum computing, given that mobile device $m$ possesses adequate resources and the success probability of quantum computation aligns with the standards of traditional quantum algorithms.

The objective of the task offloading problem is to determine a sequence of decisions that results in optimal performance of mobile devices in terms of time average.
These problems are dynamic optimization problems and can be formulated and solved based on Lyapunov optimization algorithm.
Specifically, to transform the Eq. \eqref{eq:10} into a solvable problem, a virtual queue is introduced as $z_m[r]$ to reformulate Eq. \eqref{eq:10}.
The queues have initial condition $z_m[0]=0$ and the dynamic queue $z_m$ for mobile device $m$ is expressed as
\begin{align}\label{eq:12-1}
z_m[r+1]=\max\left\{0,z_m[r]+\Delta{\boldsymbol 1}_{\left[\phi_m[r]=0\right]}-\phi_m[r]\right\},
\end{align}
where ${\boldsymbol 1}_{\left[\phi_m[r]=0\right]}$ is an indicator function. The value becomes $1$ iff $\phi_m[r]=0$, otherwise, it evaluates to $0$.

Taking into account the quadratic Lyapunov function defined as $L[r]\triangleq \frac{1}{2} \sum_{m=1}^{M}z_m[r]^2$, we find that the single-frame drift of the Lyapunov function, denoted as $\triangle L[r]=L[r+1]-L[r]$, complies with the condition
\begin{align}
\triangle L[r] \le& \frac{1}{2}\sum\nolimits_{m=1}^{M}(\Delta {\boldsymbol 1}_{[\phi_m[r]=0]}-\phi_m[r])^2\\
&+z_m[r]\sum\nolimits_{m=1}^{M}(\Delta {\boldsymbol 1}_{[\phi_m[r]=0]}-\phi_m[r]).\notag
\end{align}
For any feasible $\mathcal{Z}[r]=\left\{z_1[r],z_2[r],\dots,z_m[r]\right\}$, taking condition expectations, the one time slot drift $\triangle L[r]$ is bounded by
\begin{align} \label{eq:15}
    \mathbb{E}\left\{\triangle L[r]|\mathcal{Z}[r]\right\} \le&
    \underbrace{\frac{Mg}{2}+\frac{d}{2}\sum\nolimits_{m=1}^{M}(\Delta)^2}_{\mathcal{B}}\\+&\mathbb{E}\left\{z_m[r]\sum_{m=1}^{M}(\Delta 1_{[\phi_m[r]=0]}-\phi_m[r])|\mathcal{Z}[r]\right\},\notag
\end{align}
where $\mathcal{B}$ is the constant that satisfies for all $r$ and all feasible $\mathcal{Z}[r]$.
Thus, our objective is to minimize the following equation with consideration of a penalty term $VC(\mathcal{\phi}[r],\mathcal{Y}[r])$
\begin{equation}\label{eq:14}
    \min \mathbb{E}\left\{-VC[r]+\sum_{m=1}^Mz_m[r](\Delta{\boldsymbol 1}_{[\phi_m[r]=0]}-\phi_m[r])|\mathcal{Z}[r]\right\},
\end{equation}
where $V>0$ is the balance parameter of the Lyapunov optimization techniques.

It is noted that directly addressing (\ref{eq:14}) poses a considerable difficulty, which is primarily due to the disparate initial tasks $\mathcal{J}_m[r]$.
Inspired by \cite{neely2012dynamic}, we can find $C$-additive approximation of the action strategy such that for a given non-negative constant $C$, we have
\begin{align}
\mathbb{E}&\left\{-VC[r]+\sum_{m=1}^Mz_m[r](\Delta {\boldsymbol 1}_{[\phi_m[r]=0]}-\phi_m[r])|\mathcal{Z}[r]\right\}\notag \\
&\leq C+\inf_{\pmb{\phi}[r],\mathcal{Y}[r]} \left\{\mathbb{E}\left\{
-VC[r]+\sum\nolimits_{m=1}^Mz_m[r]\right.\right.\notag\\
&\left.\left.\times(\Delta {\boldsymbol 1}_{[\phi_m[r]=0]}-\phi_m[r])|\mathcal{Z}[r]
\right\}
\right\}.
\end{align}
Assume the above algorithm is implemented using any $C$-additive approximation on every time slot $r$ for the minimization in (\ref{eq:14}).
Subsequently, we are able to select our variables $\pmb{\phi}[r],\mathcal{Y}[r]$ to address the optimization problem.
Then the optimization problem can be formulated as
\begin{equation}\label{eq:add}
    \min -VC[r]+\sum\nolimits_{m=1}^Mz_m[r](\Delta {\boldsymbol 1}_{[\phi_m[r]=0]}-\phi_m[r]).
\end{equation}

The following theorem establishes the tightness  of the proposed solution using the $C$-additive approximation to the optimal solution.

\begin{theorem}\label{theorem:1}
Considering the constraints of problem \eqref{eq:11} are feasible and constants $C\ge 0$, $V \ge 0$, then, for any integers $r>0$ we have
\begin{equation}
    \lim_{r \to \infty} \inf C[r] \ge C_{opt}-{(\mathcal{B}+C)}/{V}.
\end{equation}
\end{theorem}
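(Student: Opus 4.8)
The plan is to run the standard drift-plus-penalty telescoping argument of Lyapunov optimization, in the form adapted to a $C$-additive minimizer. First I would append the penalty $-VC[r]$ to both sides of the conditional drift bound \eqref{eq:15}; this produces a bound on $\mathbb{E}\{\triangle L[r]-VC[r]\mid\mathcal{Z}[r]\}$ whose right-hand side equals $\mathcal{B}$ plus precisely the conditional expectation that is minimized in \eqref{eq:14}. Since the action chosen by the proposed two-stage (DQN/DDPG) policy is, by construction, a $C$-additive minimizer of \eqref{eq:14}, that conditional expectation is at most $C+\inf_{\pmb\phi[r],\mathcal{Y}[r]}\{\cdot\}$, where the infimum runs over all policies feasible for the realized exogenous state at slot $r$. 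Hence
\begin{equation*}
\mathbb{E}\{\triangle L[r]-VC[r]\mid\mathcal{Z}[r]\}\le \mathcal{B}+C+\inf_{\pmb\phi[r],\mathcal{Y}[r]}\mathbb{E}\Big\{-VC[r]+\sum\nolimits_{m=1}^{M}z_m[r]\big(\Delta{\boldsymbol 1}_{[\phi_m[r]=0]}-\phi_m[r]\big)\,\Big|\,\mathcal{Z}[r]\Big\}.
\end{equation*}

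The key step is to upper-bound that infimum by evaluating it at a well-chosen comparison policy. Because the constraints of \eqref{eq:11} are assumed feasible, the stationary-randomized-policy construction of \cite{neely2012dynamic} guarantees a policy $\pi^\star$ whose decisions depend only on the exogenous state (task arrivals, channel gains), not on the backlog $\mathcal{Z}[r]$, which attains the optimal time-average cost $\mathbb{E}\{C^{\pi^\star}[r]\}=C_{opt}$ while satisfying the long-term offloading requirement \eqref{eq:99}, so that $\mathbb{E}\{\Delta{\boldsymbol 1}_{[\phi_m^\star[r]=0]}-\phi_m^\star[r]\}\le 0$ for every $m$. Substituting $\pi^\star$ into the infimum and using that each $z_m[r]\ge 0$ is a component of $\mathcal{Z}[r]$ while $\pi^\star$'s actions are independent of $\mathcal{Z}[r]$, every queue term is non-positive in expectation, leaving the clean per-slot inequality
\begin{equation*}
\mathbb{E}\{\triangle L[r]-VC[r]\mid\mathcal{Z}[r]\}\le \mathcal{B}+C-VC_{opt}.
\end{equation*}

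Finally I would take the unconditional expectation, invoke the law of iterated expectations so that $\mathbb{E}\{\triangle L[r]\}=\mathbb{E}\{L[r+1]\}-\mathbb{E}\{L[r]\}$, and sum over $r=0,1,\dots,\mathcal{R}-1$. The $L$-terms telescope; discarding the non-negative $\mathbb{E}\{L[\mathcal{R}]\}$ and using $L[0]=0$ gives $-V\sum_{r=0}^{\mathcal{R}-1}\mathbb{E}\{C[r]\}\le \mathcal{R}(\mathcal{B}+C-VC_{opt})$. Dividing by $V\mathcal{R}$, rearranging, and sending $\mathcal{R}\to\infty$ yields $\liminf_{\mathcal{R}\to\infty}\frac{1}{\mathcal{R}}\sum_{r=0}^{\mathcal{R}-1}\mathbb{E}\{C[r]\}\ge C_{opt}-(\mathcal{B}+C)/V$, which is the asserted bound.

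I expect the main obstacle to be the second step: rigorously justifying the existence of the backlog-independent optimal policy $\pi^\star$ within the feasible set of \eqref{eq:11}, together with the measurability/independence that permits pulling $z_m[r]$ out of the expectation of the product $z_m[r]\big(\Delta{\boldsymbol 1}_{[\phi_m[r]=0]}-\phi_m[r]\big)$. This is delicate here because the task tuples $\mathcal{J}_m[r]$ are heterogeneous and time-varying and the constraint function involves the nonsmooth indicator ${\boldsymbol 1}_{[\phi_m[r]=0]}$, so one must appeal carefully to the general machinery of \cite{neely2012dynamic}. Everything downstream — the drift algebra, the telescoping, and the limit — is routine once the per-slot bound $\mathbb{E}\{\triangle L[r]-VC[r]\mid\mathcal{Z}[r]\}\le\mathcal{B}+C-VC_{opt}$ is in hand; as a byproduct, the same bound telescoped against $V\sum_r\mathbb{E}\{C[r]\}$ also delivers mean-rate stability of the virtual queues $z_m[r]$, hence asymptotic feasibility of \eqref{eq:99}.
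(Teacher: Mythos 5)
Your proposal is correct and follows essentially the same route as the paper's Appendix A: bound the conditional drift-plus-penalty via the $C$-additive approximation, compare against a backlog-independent (i.i.d./stationary randomized) policy achieving $C_{opt}$ whose queue terms are non-positive in expectation, then telescope, divide by the horizon, and use $L[0]=0$ with non-negativity of $L$. The only cosmetic difference is that the paper handles the approximately optimal i.i.d. policy via a $\sigma\to 0$ limiting device, whereas you invoke the exact comparison policy directly and make the non-positivity of the queue terms explicit.
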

\begin{proof}
See Appendix A.
\end{proof}

As mentioned before, the time-average constraints \eqref{eq:99} can be guaranteed if the virtual queue $\mathcal{Z}[r]=\left\{ z_1[r],z_2[r],\dots,z_M[r]\right\}$ is stable. For the {\em drift-plus-penalty} algorithm, to describe the convergence time to the desired constraints, we introduce the following theorem.
\begin{theorem}\label{theorem:2}
    For any integer $\tau >0$, the convergence time to the desired constraints within $\mathcal{O}\left( \frac{V}{\tau}\right)$, if all the conditions in Theorem \ref{theorem:1} are satisfied.
\end{theorem}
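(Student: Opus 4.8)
The plan is to obtain a uniform-in-time bound on the virtual queues $z_m[r]$ that grows only linearly in $V$, and then convert it, through the recursion \eqref{eq:12-1}, into an $\mathcal{O}(V/\tau)$ bound on the running-average violation of the long-term constraint \eqref{eq:99} after $\tau$ slots; that decaying slack is what ``convergence time to the desired constraints within $\mathcal{O}(V/\tau)$'' should mean.

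First I would build a stabilizing (negative) drift inequality. Since \eqref{eq:11} is feasible, I invoke a Slater-type policy: there exist a stationary randomized action rule and a constant $\eta>0$ such that $\mathbb{E}\{\Delta{\boldsymbol 1}_{[\phi_m[r]=0]}-\phi_m[r]\}\le-\eta$ for every $m$, with bounded per-slot cost $|C[r]|\le C_{\max}$. Feeding this policy into the drift bound \eqref{eq:15} and using that the implemented policy is a $C$-additive approximation of the minimiser in \eqref{eq:14} gives, after absorbing the bounded terms,
\begin{equation}
\mathbb{E}\{\triangle L[r]\mid\mathcal{Z}[r]\}\le \mathcal{B}+C+2VC_{\max}-\eta\sum\nolimits_{m=1}^{M}z_m[r].
\end{equation}
Hence, once $\sum_m z_m[r]$ exceeds $(\mathcal{B}+C+2VC_{\max}+1)/\eta$, the conditional Lyapunov drift is at most $-1$; together with the bounded increments $|z_m[r+1]-z_m[r]|\le\max\{\Delta,1\}\le 1$ coming from \eqref{eq:12-1}, a standard Lyapunov-drift boundedness argument yields $\mathbb{E}\{z_m[r]\}\le\Theta$ for all $r$ and all $m$ with $\Theta=\mathcal{O}(V)$ (one may equivalently bound $\mathbb{E}\{z_m[r]\}\le\sqrt{2\,\mathbb{E}\{L[r]\}}$ by Jensen and control $\mathbb{E}\{L[r]\}$ via the same inequality).

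Second I would exploit the queue update itself. Dropping the $\max\{0,\cdot\}$ in \eqref{eq:12-1} gives $z_m[r+1]\ge z_m[r]+\Delta{\boldsymbol 1}_{[\phi_m[r]=0]}-\phi_m[r]$; summing over $r=0,\dots,\tau-1$ with $z_m[0]=0$ and taking expectations yields
\begin{equation}
\frac{1}{\tau}\sum\nolimits_{r=0}^{\tau-1}\mathbb{E}\bigl\{\Delta{\boldsymbol 1}_{[\phi_m[r]=0]}-\phi_m[r]\bigr\}\le\frac{\mathbb{E}\{z_m[\tau]\}}{\tau}\le\frac{\Theta}{\tau}=\mathcal{O}\!\left(\frac{V}{\tau}\right).
\end{equation}
Thus the time-average form of \eqref{eq:99} is met up to a slack of order $V/\tau$, which vanishes as $\tau\to\infty$ and recovers \eqref{eq:99}, consistent with the hypotheses of Theorem \ref{theorem:1}.

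The main obstacle is the first step. A naive telescoping of \eqref{eq:15} alone only gives $\mathbb{E}\{L[\tau]\}=\mathcal{O}(\tau)$, hence the much weaker $z_m[\tau]/\tau=\mathcal{O}(1/\sqrt{\tau})$ with no $V$-dependence; extracting the advertised $\mathcal{O}(V/\tau)$ rate genuinely needs the negative drift outside an $\mathcal{O}(V)$-radius ball, which in turn rests on a strict-feasibility (Slater) condition. I would therefore either add that condition as an explicit standing assumption or argue it follows from feasibility of \eqref{eq:11} together with the purely asymptotic nature of \eqref{eq:99}. Everything after that --- the telescoping, the Jensen step, and the accounting of the constants hidden in $\mathcal{B}$ and $C_{\max}$ --- is routine.
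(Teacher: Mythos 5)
Your proposal is correct in outline but takes a genuinely different route from the paper's own proof, and the comparison is instructive. The paper does precisely the ``naive telescoping'' you set aside: it takes expectations of the drift-plus-penalty bound \eqref{eq:18}, sums over $r=0,\dots,\tau-1$ using $L[0]=0$ to get $\mathbb{E}\{L[\tau]\}\le\tau(B+C)+V\mathbb{E}\{\sum_{r}C[r]\}-VC^{\mathrm{opt}}\tau$, divides by $\tau^2$ and takes a square root, ending with a bound of the form $\sqrt{\bigl(2(B+C)+V(C[\tau]-C^{\mathrm{opt}})\bigr)/\tau}$ on the constraint residual --- i.e.\ an $\mathcal{O}(\sqrt{V/\tau})$ rate rather than the $\mathcal{O}(V/\tau)$ announced in the theorem statement. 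Your diagnosis is therefore accurate: without a Slater-type strict-feasibility constant $\eta>0$, telescoping alone cannot deliver a rate that is linear in $V$ and decays like $1/\tau$, and your route --- negative conditional drift outside an $\mathcal{O}(V)$-radius ball, hence a uniform-in-time $\mathcal{O}(V)$ bound on $\mathbb{E}\{z_m[r]\}$, then the sample-path inequality $z_m[\tau]\ge\sum_{r=0}^{\tau-1}(\Delta{\boldsymbol 1}_{[\phi_m[r]=0]}-\phi_m[r])$ obtained by dropping the $\max\{0,\cdot\}$ in \eqref{eq:12-1} --- is the standard (Neely-style) way to obtain it. What each approach buys: the paper's argument needs no hypothesis beyond feasibility of \eqref{eq:11} and the $C$-additive approximation, but only proves the weaker square-root decay; yours requires the Slater assumption as an explicit additional condition (feasibility of \eqref{eq:99} alone gives only $\eta\ge 0$, so you cannot fully dispense with it), but it actually establishes the rate the theorem claims. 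One further point in your favour: the paper's final display bounds the \emph{instantaneous} quantity $(\Delta{\boldsymbol 1}_{[\phi_m[r]=0]}-\phi_m[r])$ by the square-root expression, whereas the queue recursion only controls its \emph{time average}; your version of that final step is the correctly stated one.
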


\begin{proof}
    See Appendix B.
\end{proof}

\section{Algorithm Design}
In an effort to determine the sub-optimal offloading strategy of \eqref{eq:11}, we apply the DQN and DDPG algorithms to resolve problem \eqref{eq:add} within each time slot. Subsequently, we put forth a DRL-based Lyapunov algorithm to address the optimization problem \eqref{eq:11}.
This suggested algorithm encompasses two components: an outer loop based on Lyapunov optimization and an inner loop driven by DRL. In this section, we propose a fully cooperative, multi-agent, discrete-continuous DRL algorithm designed to learn the optimal task offloading strategy.
For the successful implementation of the DQN and DDPG algorithms to solve both discrete and continuous problems, it is crucial to define the state space, action space, and reward for each agent.

\subsection{State Space}
In DQN and DDPG, the state space represents the comprehensive set of potential states that agents--in this case, the mobile devices within our hybrid MEQC system--could occupy at any given time slot within the environment.
Moreover, in the state space, we define by $\text{Observe}_m[r]=\left\{\text{Sever}_m, \mathcal{J}_m[r], g_{m,0}[r], f_m\right\}$ as the observation space of mobile device $m$ at time slot $r$. $\text{Sever}_m=\left\{f_{m,0},q_m^C\right\}$ represents the server resources leased by device $m$.

\subsection{Action Space}
In the frameworks of RL, the action space designates the set of all potential actions that an agent can perform in a given state within the environment. The action space can be either continuous or discrete. In our proposed system, we advocate for a continuous action space, which provides the agent with the flexibility to choose from a spectrum of possible actions, specifically, the fraction of tasks to offload to the remote server.
Each mobile device, denoted as $m$, maintains the option to offload its computing tasks locally, to a cloud-based CPU, or to a remote QPU. This approach of thinking step-by-step facilitates a more coherent and consistent analysis.
The computation offloading matrix can be treated as a mixed action space, combining discrete and continuous actions, which can be denoted as ${\cal A}=\{{\cal A}_m\}_{m=1}^M$, where  ${\cal A}_m[r]=\{{\phi}_m, \mathcal{Y}_m\}$.

\subsection{Reward}
Within the framework of RL, the reward is characterized as a scalar signal, indicating the effectiveness of an agent's performance in executing a specific task.
This reward signal functions as a form of environmental feedback, provided to the agent following each action undertaken. The overarching goal of the agent is to establish a policy that optimizes the cumulative sum of rewards over a specified temporal span.
The reward at each environment step that mobile device $m$ can obtain can be denoted as $R=-C(\pmb{\phi}[r], \mathcal{Y}[r])$.
The reward $R$ is primarily related to the state of the mobile device and the chosen action.

\subsection{Model Setting}
The protein folding problem in the field of bioinformatics is a substantial scientific computation task that entails simulating the transition of a protein chain into its three-dimensional formation.
The duration of this process can extend beyond ten hours, or even span several days when using single-machine serial computation--a time slot that many researchers find untenable.
In light of this issue, Robert {\em et al.} \cite{robert2021resource}  have suggested that quantum computers might provide a more expeditious and efficient resolution to this problem compared to classical computing methods.
Thus, in this paper, we generate synthetic data of protein folding simulations based on the following assumptions: the number of amino acids per protein is randomly chosen from the range of $[30, 90]$.
The all-atom molecular dynamics simulation time is randomly assigned within $[100\mu\text{s}, 1\text{ms}]$, and the datasize of each mobile device is set as $[160,320] \times 10^2$ Mb.


\section{Experimental Evaluation}
\subsection{Experimental Settings}
In this section, we assign practical values to each of the parameters mentioned previously, based on real-world considerations. Specifically, we assess and compare the performance of the proposed two-stage algorithm, which melds Lyapunov optimization techniques with DRL, within a hybrid MEQC system. This system spans an area of $\pi(50\times50)\text{~m}^2$, centered around the cloud, and accommodates 15 mobile IoT devices.
The performance evaluation and comparison under varying conditions are reserved on the assumption that the number of physical qubits purchased by each mobile device from the remote server is randomly within the range of $[1000, 5000]$, remaining constant across all time slots. The communication channel model adheres to the sub-band scheme described in \cite{cheng2020vehicular,wang2021general}, with each sub-band allocated a bandwidth of $0.1\text{~GHz}$ to facilitate efficient data transmission.
To simulate heterogeneity among mobile IoT devices, we introduce a randomized selection process for the transmission power of devices within the range of $[0.01, 0.2]\text{~dBm}$.
Similarly, to ascertain the local computation capability of each device, we randomly sample from a predefined set of frequencies $\{1, 2, 3\}\text{~GHz}$. In each time slot, the task volume for each device is randomly generated within the range of $[160, 320]\text{~MB}$. This step-by-step approach ensures a higher degree of coherence and consistency in our analysis.
The processing time $\tau_1, \tau_2, \tau_M$, energy consumption for each physical gate $P_1, P_2, P_M, P_Q$, and the number of gates $N_1, N_2, N_M$ follow the guidelines provided in \cite{ngoenriang2022optimal}.
The hyper-parameters utilized in the DDPG algorithm are as follows.
The hidden layers are constructed using three fully-connected layers consisting of $512$ units. The discounting factor is set to $0.913$ and the learning rate is set to $0.001$. The experiments are conducted using Python 3.8, PyTorch 1.12.1, and CUDA 12.0.

In our research, the primary objective revolves around the development of cost-efficient task offloading strategies for mobile devices operating within hybrid MEQC systems. The decision-making for such policy is heavily influenced by two fundamental factors: the first being the impact originating from the locations of mobile devices, and the second being the consequences of inherent computing capacities, as exhibited by entities such as mobile devices, CPUs, and GPUs.
In order to simulate a realistically representative mobility scenario, we incorporate the GMMM, as demonstrated in \eqref{eq:12} and \eqref{eq:13}, as detailed in the studies by Gao {\em et al.} \cite{gao2019dynamicsocial} and Camp {\em et al.} \cite{camp2002survey}.
Subsequently, we provide a detailed description of the process.
The speed and direction of mobile devices in the GMMM{\footnote{Here, we consider slow mobility scenarios \cite{gao2019dynamicsocial,camp2002survey}, which are basically $3\sim 5\text{~m}/\text{s}$. So, the Doppler frequency shift falls outside the scope of this paper's considerations.}} is given by
\begin{align}\label{eq:12}
\hspace{-0.5em}\begin{cases}
        v_m[r]=\delta v_m[r-1]+(1-\delta)\Bar{v}_m+(1-\delta^2)\tilde{v}_m[r],\\
        \eta_m[r]=\delta \eta_m[r-1]+(1-\delta)\Bar{\eta}_m+(1-\delta^2)\tilde{\eta}_m[r].
    \end{cases}
\end{align}
Here, $0 \le \delta \le 1$ serves as the tuning parameter utilized to adjust the degree of mobility randomness, while $v_m[r]$ and $\eta_m[r]$ represent the speed and direction, respectively, of mobile device $m$ at the time slot $r$.
These values can be derived from the preceding time slot, $r-1$.
The parameters $\Bar{v}_m$ and $\Bar{\eta}_m$ stand for the mean values of velocity and direction respectively as time slot $t$ approaches infinity, with these quantities remaining constant for each individual mobile device $m$.
Conversely, $\tilde{v}_m[r]$ and $\tilde{\eta}_m[r]$ are recognized as random variables that adhere to a standard normal distribution.
In the initial state, all mobile devices are randomly dispersed within the coverage area of the base station.
The parameters of velocity $v_m[r]$ and direction $\eta_m[r]$ pertaining to mobile device $m$ at time slot $r$ exhibit dependence on the position and direction from the preceding time slot, $r-1$.
Upon the mobile device nearing the point of exiting the base station's coverage, an imposed modification occurs in both its immediate and average directional trajectories \cite{camp2002survey}.
The position of mobile device $m$ at time slot $r$ can be deduced from
\begin{align}\label{eq:13}
\begin{cases}
x_m[r]=x_m[r-1]+v_m[r-1]\cos{(\eta_m[r-1])},\\
y_m[r]=y_m[r-1]+v_m[r-1]\sin{(\eta_m[r-1])}.
\end{cases}
\end{align}
where $(x_m[r-1],y_m[r-1])$ and $(x_m[r],y_m[r])$ denote the horizontal and vertical coordinates of mobile device $m$ at time slot $r-1$ and $r$.

\begin{figure*}[!t]
\centering
\subfigure[]{
\begin{minipage}[t]{0.3\linewidth}
\includegraphics[width=2.5in]{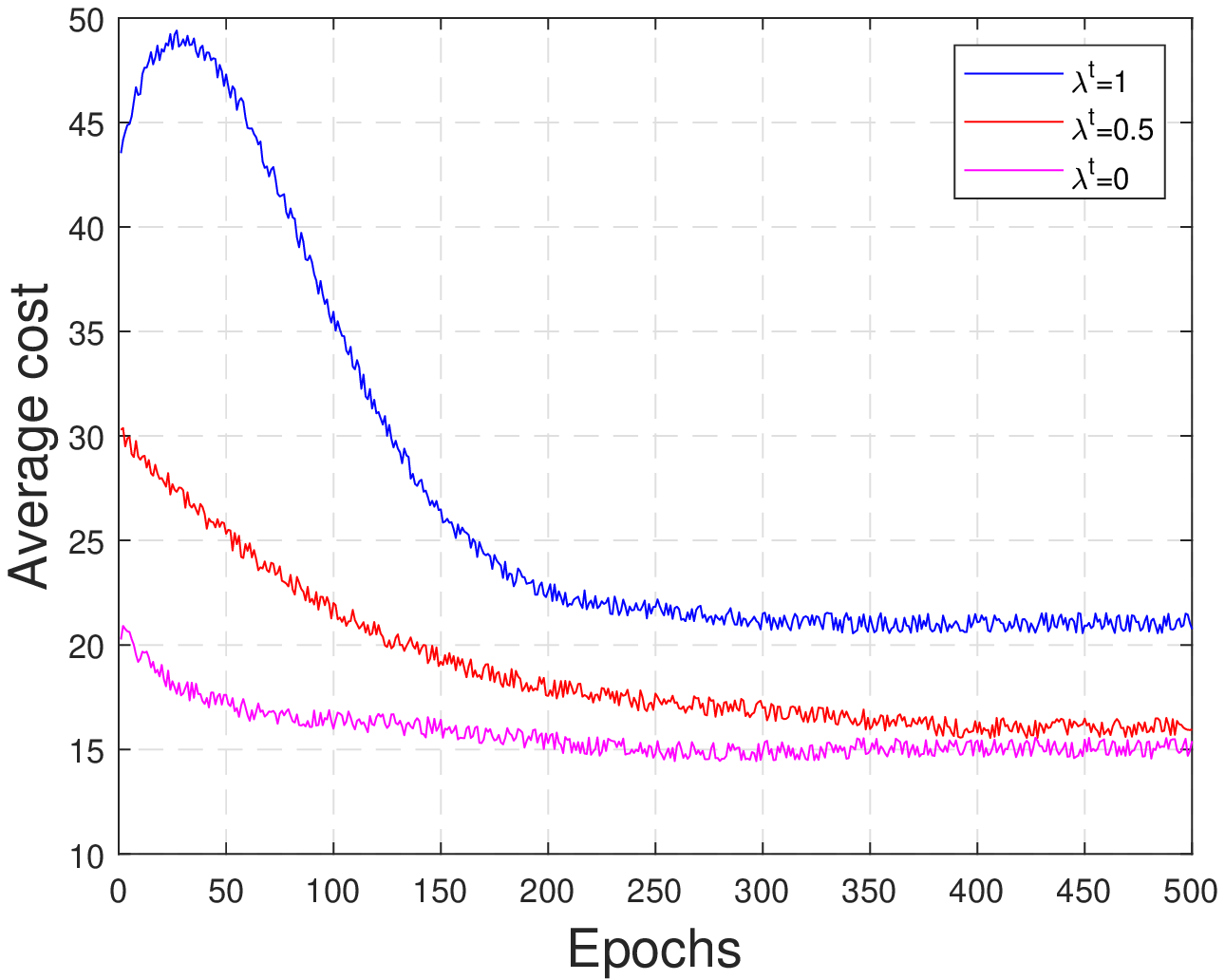}
\end{minipage}
\label{fig:3-1}
}
\subfigure[]{
\begin{minipage}[t]{0.3\linewidth}
\includegraphics[width=2.5in]{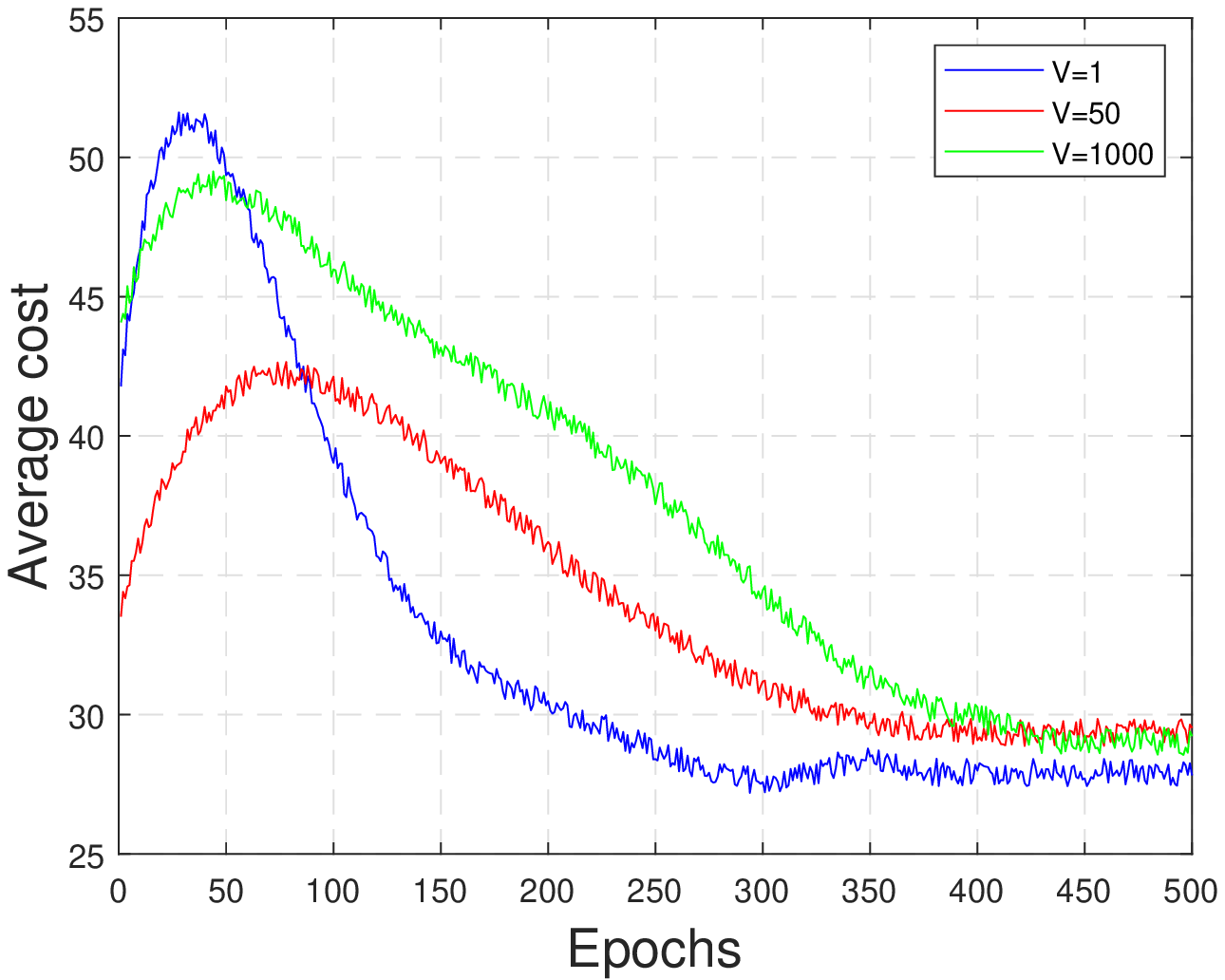}
\end{minipage}
\label{fig:3-2}
}
\subfigure[]{
\begin{minipage}[t]{0.3\linewidth}
\includegraphics[width=2.5in]{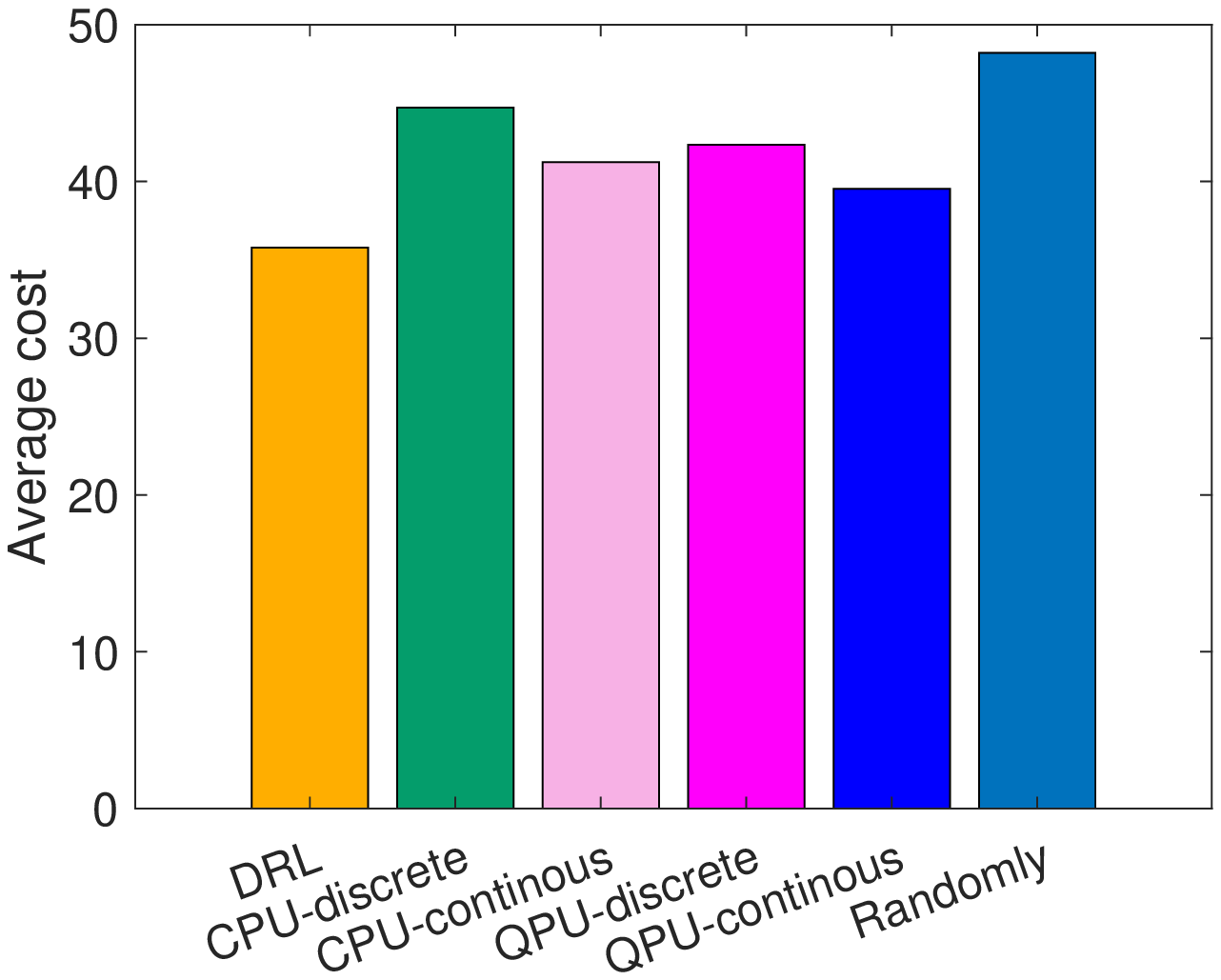}
\end{minipage}
\label{fig:3-3}
}
\caption{(a)Average cost versus epochs in the static scenario,  (b)Average cost versus epochs under different V, and (c)Performance comparison.}
\end{figure*}
\subsection{Effectiveness Verification of the Proposed Algorithm}
Figure \ref{fig:3-1} presents the performance of the DRL-based algorithm with different values of $\lambda_m^t$ within a static context.
From the results, we observe that the time-average cost in the early stage is high due to the randomness of policy in the initial state.
However, in correspondence with an increase in the number of DRL iterations, a steady downward trend is observed in the curves associated with $\lambda^t\in\{1, 0\}$.
The complexity of the objective increases when $\lambda^t=0.5$, thereby the convergence is slowest, taking around $350$ epochs to reach a steady state.
This happens because the algorithm must contemplate a combination of latency and energy consumption optimization, in conjunction with a long-term constraint concerning the proportion of offloading to remote servers.
Devices that are sensitive to energy considerations demonstrate a more rapid convergence compared to their latency-sensitive counterparts, achieving the state of convergence within approximately $200$ iterations.
In contrast, those devices that are delay-sensitive require an estimated $250$ iterations to attain a similar state of convergence.
The results of Fig. \ref{fig:3-1}  suggest that latency presents a more significant cost impact compared to energy consumption.
Furthermore, it's inferred that less-than-optimal network conditions can detrimentally influence user experience, particularly as the data size escalates.

We identify the characteristics of the Lyapunov-based outer loop algorithm, specifically with respect to the epochs (time slots) under different values of the balance parameter $V.$
To this end, the time-average cost performance of the DRL-based Lyapunov algorithm is evaluated by varying $r$ from $0$ to $500$.
Fig. \ref{fig:3-2} shows the impact of the balance parameter $V$ on the time-average cost for the DRL-based Lyapunov algorithm.
As expected,  the time-average cost is asymptotically stable as it evolves over time.
\begin{figure}[!t]
	\centering
	\includegraphics[width=3.5in]{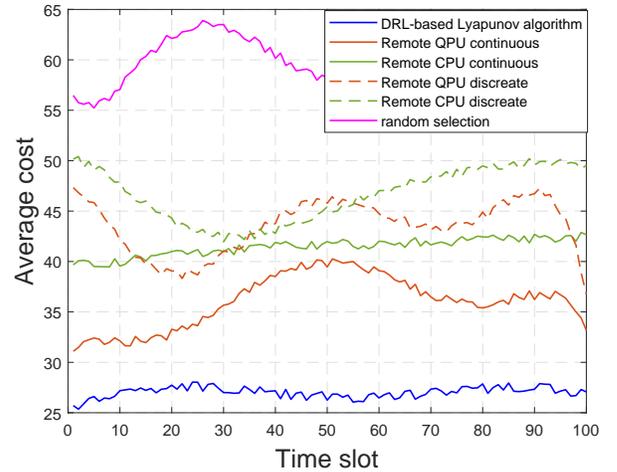}
	\caption{Average cost performance comparison when the mobility of devices follows the GMMM.}
	\label{fig:5}
\end{figure}
\begin{figure*}[!t]
\centering
\subfigure[]{
\begin{minipage}[t]{0.3\linewidth}
\includegraphics[width=2.3in]{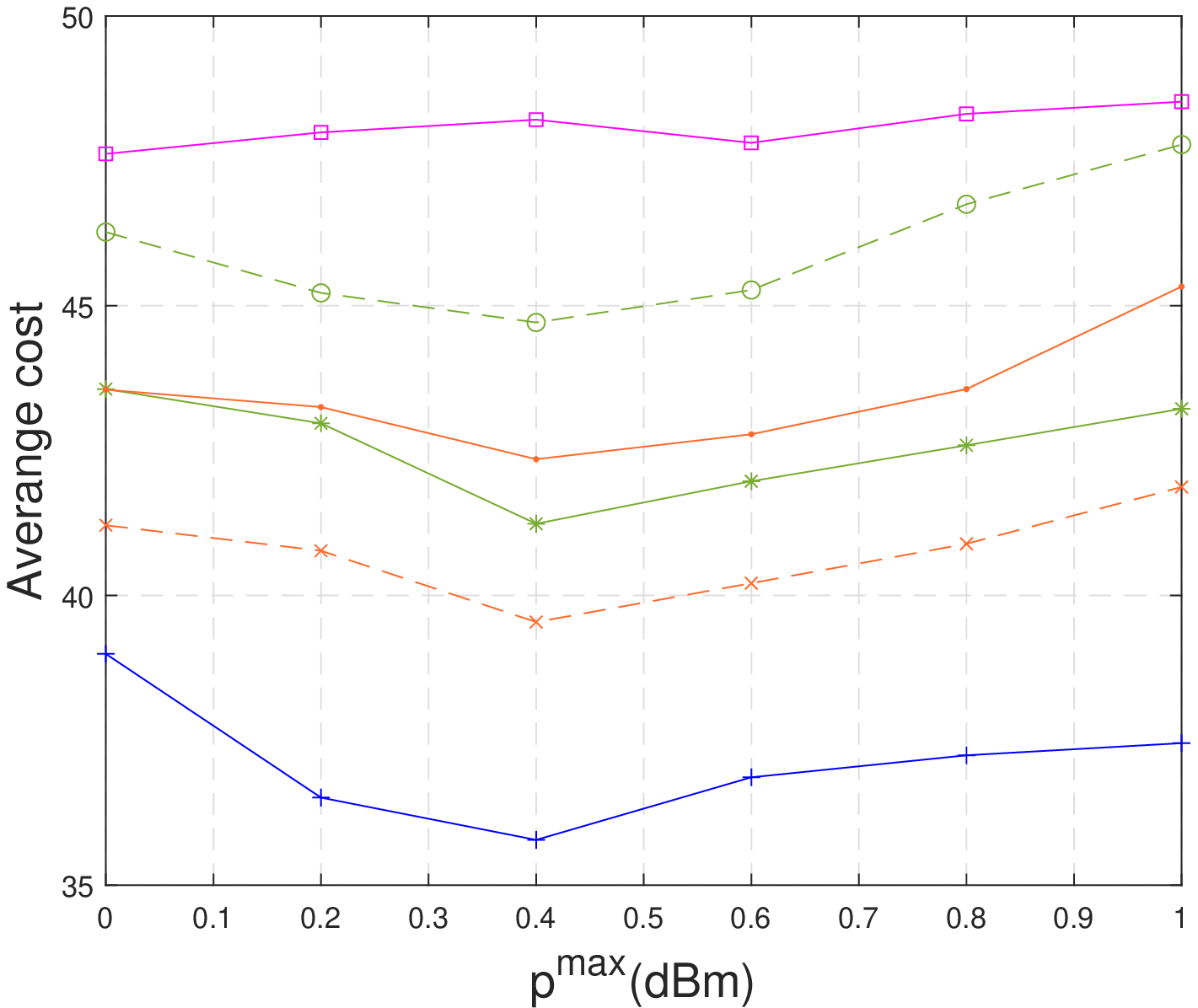}
\end{minipage}
\label{fig:2-1}
}
\subfigure[]{
\begin{minipage}[t]{0.3\linewidth}
\includegraphics[width=2.3in]{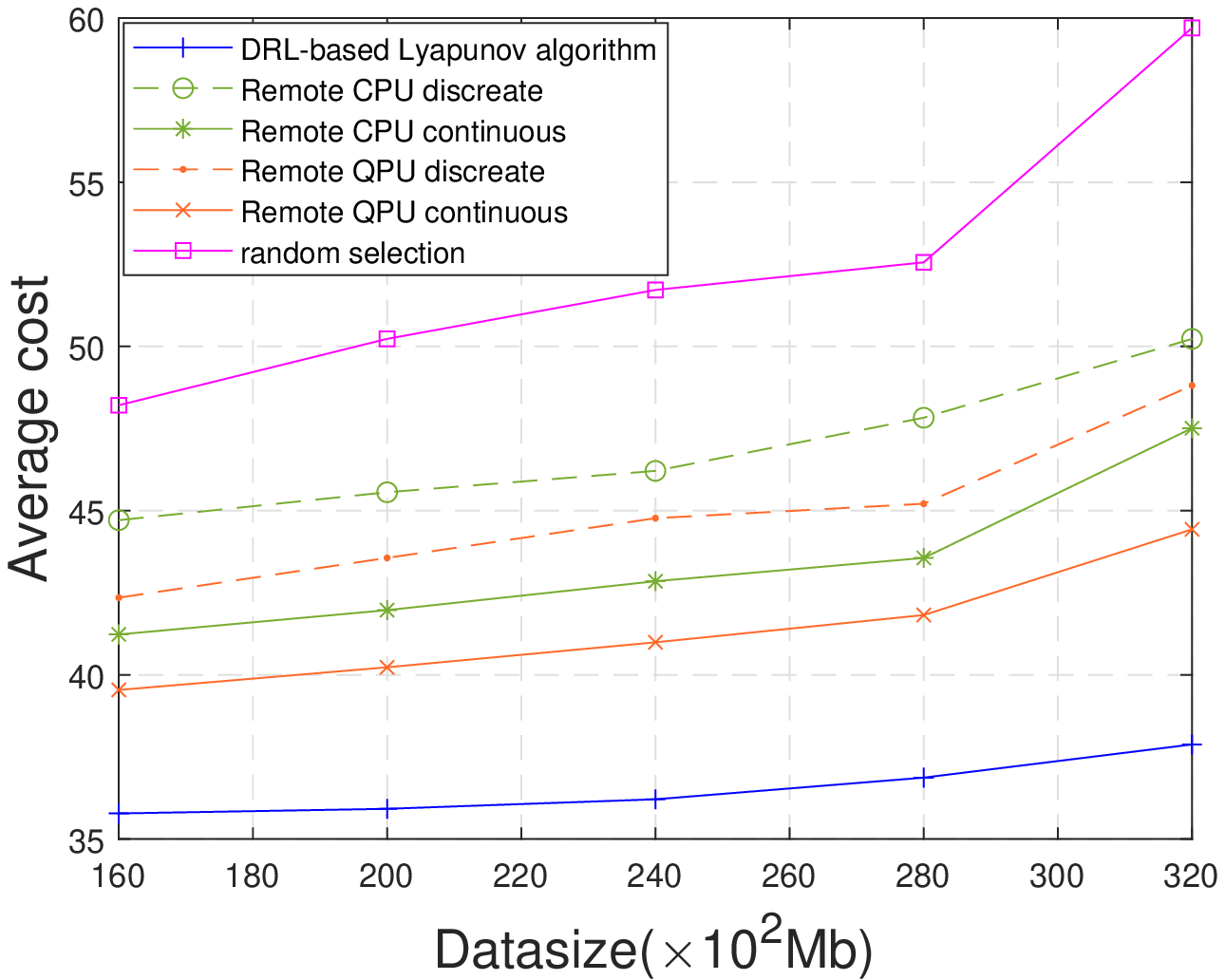}
\end{minipage}
\label{fig:2-2}
}
\subfigure[]{
\begin{minipage}[t]{0.3\linewidth}
\includegraphics[width=2.3in]{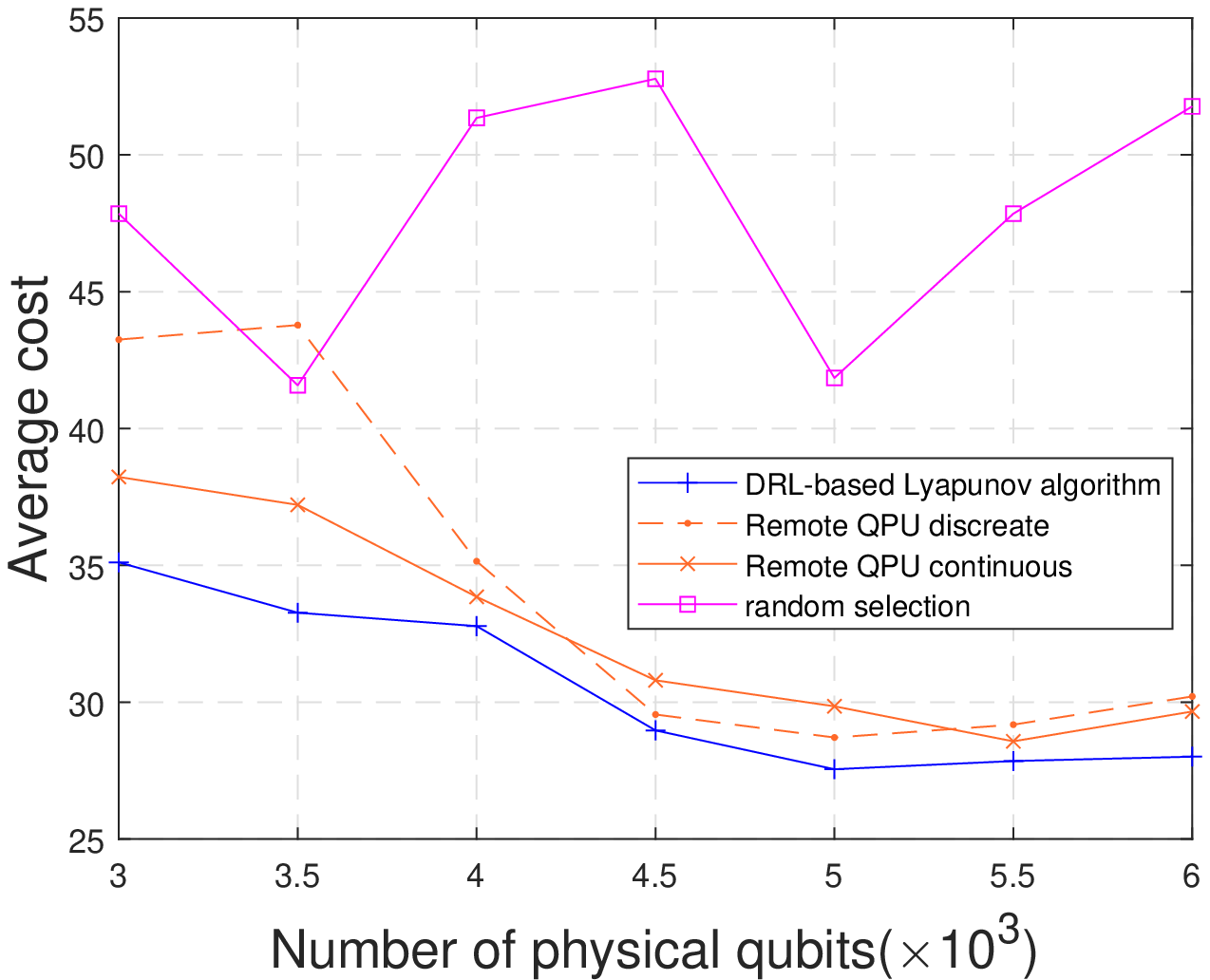}
\end{minipage}
\label{fig:2-3}
}
\caption{ (a) Average cost under different $p^{\max}$, (b) Average cost under different datasize, and (c) Average cost under different number of qubits.}
\label{fig:2}
\end{figure*}

In Fig. \ref{fig:3-3}, we compare the time-average cost of the proposed algorithm against those derived from the following methods. i) CPU-discrete (respectively, CPU-continous): the mobile device can solely access the CPU resources of the remote server, and the device is left with the decision to either offload the entirety (respectively, the fraction) of the task to the remote CPU or not.
ii) QPU-discrete (respectively, QPU-continous): the mobile device can solely access the QPU resources of the remote server, and the device is left with the decision to either offload the entirety (respectively, the fraction) of the task to the remote QPU or not.
iii) Randomly: the mobile device selects the task offloading modes randomly.
As shown in Fig. \ref{fig:3-3}, the DRL-based algorithm for the hybrid MEQC system outperforms other baseline methods.
In line with expectations, the method of randomly selecting the computation offloading mode results in the highest time-average cost, which is largely attributed to the heterogeneity of the mobile devices in aspects such as datasize, computing capabilities, and locations.
In addition, the methods coupled with continuous variable optimization always outperforms discrete variable optimization.
Indeed, facilitating tasks to be offloaded in fractions amplifies the likelihood of realizing cost efficiencies.

Figure \ref{fig:5} depicts the time-average cost performance against time slots for the methods discussed.
The DRL-based Lyapunov algorithm, designed for the hybrid MEQC system, consistently yields the lowest average cost even amidst these dynamic situations.
Conversely, the cost for the method predicated on randomly selecting task offloading modes is markedly higher, exhibiting
substantial fluctuations across various time slots, predominantly due to its disregard of channel conditions.
Further, the method relying solely on cloud servers outfitted  with QPUs demonstrates a lack of robustness, resulting in significant cost oscillations, a consequence of the inherent mobility of devices.

\subsection{Performance Demonstration of the Proposed Method}
\subsubsection{Performance Comparison Versus $p^{\max}$}
Fig. \ref{fig:2-1} plots the time-average cost performance of the aforementioned methods with $p^{\max}$ varied from $0\text{~dBm}$ to $1\text{~dBm}$.
From the results, we observe that the time-average cost of the proposed DRL-based Lyapunov algorithm first decreases and then progressively stabilizes as the maximum transmit power, $p^{\max}$, increases, and outperforms other methods.
It is clear from Fig. \ref{fig:2-1} that an appropriate choice of the transmit budget $p^{\max}$ (i.e.,  $p^{\max}=0.4\text{~dBm}$) can lead the lowest time-average cost.
Furthermore, it's worth noting that when $p^{\max}$ falls within the range of $[0.8, 1]\text{~dBm}$, the average cost associated with the proposed DRL-based Lyapunov algorithm remains relatively stable.
This phenomenon can be attributed to the algorithm's strategic response to increased communication costs, selecting to curtail the volume of tasks offloaded to cloud servers to keep the total cost low.
The random selection system, depicted by the pink line, remains largely unaffected by changes in $p^{\max}$.
However, it consistently yields the highest cost across all $p^{\max}$ values.

\subsubsection{Performance Comparison Versus Datasize}
Fig. \ref{fig:2-2} presents the time-average cost performance of the aforementioned methods with varied the datasize from $160\times 10^2\text{~Mb}$ to $320\times 10^2\text{~Mb}$.
As can be observed from the figure, the time-average cost of all methods increases as datasize of the mobile devices goes up, since the data size is always detrimental of the unilateral system efficiency.
Furthermore, it is noted that our proposed DRL-based Lyapunov algorithm in the hybrid MEQC system outperforms other methods.

\subsubsection{Performance Comparison Versus The Number of Qubits}
Fig. \ref{fig:2-3} shows the influence of the number of qubits in the time-average cost for all the mentioned methods.
From this figure, we observe that the average cost of the DRL-based Lyapunov algorithm is not significantly impacted by an escalation in the number of physical qubits.
The rationale lies in the fact that the quantity of qubits present at the edge server stipulates the probability of mobile users gaining a quantum advantage.
As the count of physical qubits increases, there is a tendency for mobile users to offload a greater fraction of tasks to the QPU, consequently leading to a reduction in costs when the network conditions are conducive.
\begin{figure}[!t]
	\centering
	\includegraphics[width=3.5in]{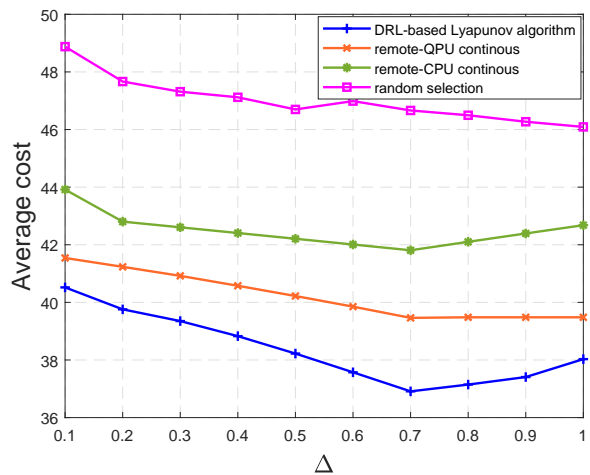}
	\caption{Average cost versus $\Delta$.}
	\label{fig:6}
\end{figure}

\subsubsection{Impact of $\Delta$}
Fig. \ref{fig:6} depicts the time-average cost of the referenced methods against the expected rate of mobile devices opting for remote offloading, denoted as $\Delta$.
For any given values of $p^{\max}$ and datasize as well as the number of qubits, there is an optimal choice of $\Delta$ (i.e., $0.7$), which leads to the minimum time-average cost.
It is observable that for the expected rate values situated to the left of the optimal point, the time-average cost of the proposed DRL-based Lyapunov algorithm decreases as the value of the value of $\Delta$ increases.
Conversely, for expected rate values to the right of the optimal point, the time-average cost increases as the value of $\Delta$ rises.
From the results, it is additionally evident that for other mentioned methods, the time-average cost undergoes negligible change when modifying the value of $\Delta$.

\section{Conclusions}
In this paper, we explore the integration of hybrid MEQC for sustainable task offloading scheduling in mobile networks. Our main focus is on developing strategies that are both sustainable and cost-efficient for selecting task offloading mode and allocating computation and communication resources. This involves addressing the challenges of the time-coupled  optimization problem.
To address these challenges, we propose an effective and computationally affordable approach that combines Lyapunov optimization techniques with DRL algorithms. Specifically, we implement a DRL algorithm to make partial-task offloading decisions and test it in a realistic network setting.
The experimental results from our numerical simulations indicate that our proposed DRL-based Lyapunov algorithm outperforms the baseline methods in terms of cost-effectiveness. This suggests that our approach holds promise for achieving sustainable and efficient task offloading within mobile networks.

\begin{appendices}
\section{Proof of Theorem \ref{theorem:1}}
If action $\phi_m[r]$ is a $C\text{-additive}$ approximation, we have
\begin{align}\label{add:1}
    \mathbb{E}&\left\{-VC(\pmb\phi[r],Y[r])+\sum_{m=1}^Mz_m[r](\Delta {\boldsymbol 1}_{[\phi_m[r]=0]}-\phi_m[r])|\mathcal{Z}[r]\right\} \notag \\
    &\le C+\mathbb{E}\left \{ -VC({\pmb\phi}^{*}[r],Y^*[r])+
    \sum\nolimits_{m=1}^Mz_m[r]\right.\notag \\
    &\left.~~\times(\Delta{\boldsymbol 1}_{[\phi_m[r]=0]} -\phi_m[r])\right\},
\end{align}
where ${\pmb\phi}^*[r],Y^*[r]$ is an i.i.d. algorithm.

Applying the aforementioned Eqs. \eqref{eq:14} and \eqref{eq:15} to \eqref{add:1}, we can arrive at the following inequality
\begin{align}\label{eq:16}
    \mathbb{E}&\left\{ \left ( \Delta L[r]-VC({\pmb\phi}[r],Y[r])\right )|\mathcal{Z}[r]\right\}\notag \\
    \le& B+C+\mathbb{E}\left \{ -VC({\pmb\phi}^*[r],Y^*[r])
    +\sum\nolimits_{m=1}^Mz_m[r]\right.\notag \\
    &\left.\times(\Delta {\boldsymbol 1}_{[\phi_m[r]=0]}-\phi_m[r]) \right\}.
\end{align}\label{eq:17}
Moreover, fixing any $\sigma<0$, the i.i.d. algorithm ${\pmb\phi}^*[r],Y^*[r]$ satisfies
\begin{align}\label{eq:17}
    \mathbb{E}\left\{ C({\pmb\phi}^*[r],Y^*[r])\right \} \ge C^{\text{opt}} + \sigma.
\end{align}
By combining \eqref{eq:16} and \eqref{eq:17} we observe that
\begin{align}\label{eq:18}
    \mathbb{E}\left\{ \left(\Delta L[r]-VC({\pmb\phi}[r],Y[r]) \right)|\mathcal{Z}[r]\right\} \le
    B+C-VC^{\text{opt}}.
\end{align}
The above holds for all $\sigma <0$. Taking a limit as $\sigma \to 0$ in the above yields
\begin{align}
    \frac{\mathbb{E}\{L[\tau]\}-\mathbb{E}\{L[0]\}-V\mathbb{E}\left\{\sum_{r=0}^{\tau=1}C({\pmb\phi}[r],Y[r])\right\}}{\tau} \notag \\
    \le B+C-VC^{\text{opt}}.
\end{align}
Thus, the result of Theorem 1 can be derived since the non-negative of quadratic Lyapunov function and $L[0]=0$.
\section{Proof of Theorem \ref{theorem:2}}
Taking expectations of \eqref{eq:18} gives
\begin{align}
    \mathbb{E}\left\{ \Delta L[r]\right\}-V\mathbb{E}\left\{ C[r]\right\} \le B+C-VC^{\text{opt}}.
\end{align}
Using $\mathbb{E}\left\{ L[0]\right\}=0$ and $L[r]=\frac{1}{2}\sum_{m=1}^{M}z_m[r]^2$, summing over $r \in \left\{ 0,1,2,\dots,\tau-1\right\}$ gives
\begin{align}
    \frac{1}{2} \mathbb{E}\left\{\sum\nolimits_{m=1}^{M} z_{m}[\tau]^{2}\right\} \leq& \tau(B+C)+V \mathbb{E}\left\{\sum\nolimits_{r=0}^{\tau-1} C[r]\right\} \notag \\
    &-V C^{\text {opt }} \tau.
\end{align}
Dividing by $\tau^2$ and taking a square root yields
\begin{align}
    \frac{\mathbb{E}\left\{\sum_{m=1}^{M} z_{m}[\tau]^{2}\right\}}{\tau} \leq \sqrt{\frac{2(B+C)+V\left(C[\tau]-C^{\mathrm{opt}}\right)}{\tau}}.
\end{align}
Then we have
\begin{align}
    (\Delta {\boldsymbol 1}_{[\phi_m[r]=0]}-\phi_m[r])\leq \sqrt{\frac{2(B+C)+V\left(C[\tau]-C^{\mathrm{opt}}\right)}{\tau}}.
\end{align}
\end{appendices}




\end{document}